\def\math#1{$#1$}
\def\mand#1{$$#1$$}
\def\v#1{{\mathbf #1}}
\def\frac#1#2{{#1\over #2}}
\def\mld#1{\begin{equation}
#1
\end{equation}}
\def\eqar#1{\begin{eqnarray}
#1
\end{eqnarray}}
\def\eqan#1{\begin{eqnarray*}
#1
\end{eqnarray*}}
\def\cl#1{{\cal #1}}
\def\a{{\mathbf a}}
\def\b{{\mathbf b}}
\def\exp#1{{\left\langle#1\right\rangle}}
\def\norm#1{{\|#1\|}}
\def\r#1{{(\ref{#1})}}
\def\dotfil{\leaders\hbox to 1.5mm{.}\hfill}
\newcounter{rmnum}
\def\RN#1{\setcounter{rmnum}{#1}\uppercase\expandafter{\romannumeral\value{rmnum}}}
\def\rn#1{\setcounter{rmnum}{#1}\expandafter{\romannumeral\value{rmnum}}}
\DeclareMathOperator*{\argmin}{argmin}
\newcommand{\FNorm }[1]{\mbox{}\left\|#1\right\|_F  }
\newcommand{\FNormS}[1]{\mbox{}\left\|#1\right\|_F^2}
\newcommand{\setlinespacing}[1]%
           {\setlength{\baselineskip}{#1 \defbaselineskip}}
\newcommand{\rank}[1]{{\bf rank}{\left(#1\right)}}
\newtheorem{definition}{Definition}
\newtheorem{lemma}{Lemma}
\newtheorem{theorem}{Theorem}
\newtheorem{corollary}{Corollary}
\newcommand{\mat}[1]{{\ensuremath{\bm{\mathrm{#1}}}}}
\def\exp{\hbox{\rm exp}}
\def\rank{\hbox{\rm rank}}
\def\a{{\mathbf a}}
\def\b{{\mathbf b}}
\def\u{{\mathbf u}}
\def\v{{\mathbf v}}
\def\matA{\mat{A}}
\def\matB{\mat{B}}
\def\matC{\mat{C}}
\def\matE{\mat{E}}
\def\matI{\mat{I}}
\def\matU{\mat{U}}
\def\matV{\mat{V}}
\def\matX{\mat{X}}
\def\matY{\mat{Y}}
\def\matSig{\mat{\Sigma}}
\def\matPsi{\mat{\Psi}}
\def\Exp{\mathbb{E}}
\newcounter{numrel}
\title{Column Selection via Adaptive Sampling}
\author{
Saurabh Paul
\thanks{Global Risk Sciences, Paypal Inc.
\texttt{saurabhpaul2006@gmail.com}}
\and Malik Magdon-Ismail 
\thanks{Computer Science Dept., Rensselaer Polytechnic Institute
\texttt{magdon@cs.rpi.edu}
}
\and Petros Drineas 
\thanks{Computer Science Dept., Rensselaer Polytechnic Institute
\texttt{drinep@cs.rpi.edu}} 
}
\begin{document}

\maketitle

\begin{abstract}
Selecting a good column (or row) subset of massive data matrices 
has found many applications in data analysis and machine learning. 
We propose a new adaptive sampling algorithm 
that can be used to improve \textit{any} 
relative-error column selection algorithm.  
Our algorithm delivers a tighter theoretical
bound on the approximation error which we also demonstrate empirically using
two well known relative-error column subset selection algorithms. 
Our experimental results on synthetic and real-world data show 
that our algorithm outperforms non-adaptive sampling as well as prior 
adaptive sampling approaches.
\end{abstract}

\section{Introduction}

In numerous machine learning and data analysis applications, the input data are modelled as a matrix $\matA \in \mathbb{R}^{m \times n}$, where $m$ is the number of objects (data points) and $n$ is the number of features. 
Often, it is desirable to represent your solution 
using a few features (to promote better generalization and interpretability of
the solutions), or using a few data points (to identify important coresets
of the data), for example PCA, sparse PCA, sparse regression, 
coreset based regression, etc.
\cite{malik190,malik196,malik170,malik164}.
These problems can be reduced to identifying a good subset of the 
columns (or rows) in the data matrix, the 
column subset selection problem (CSSP). For example, finding
an optimal sparse linear encoder for the data (dimension reduction)
can be explicitly reduced to CSSP~\cite{malik200}. 
Motivated by the fact that in many practical applications, the left
and right singular vectors of a matrix $\matA$ lacks any physical
interpretation, a long line of work \cite{CH92,DR10,DV06,DRVW06,DKR02,FKV98,HMT,LWMRT07,MD09, BMD14},  focused on extracting a subset of
columns of the matrix $\matA,$ which are approximately as good as $\matA_k$ at
reconstructing $\matA.$ To make our discussion more concrete, let us
formally define CSSP.\\

\noindent {\bf Column Subset Selection Problem, CSSP:} Find a matrix 
$\matC \in \mathbb{R}^{m \times c}$ containing $c$ columns of $\matA$ for which
\math{\FNorm{\matA-\matC\matC^+\matA}} is small.\footnote{\math{\matC\matC^+\matA}
is the best possible reconstruction of \math{\matA} by projection 
into the space spanned by the 
columns of~\math{\matC}.} 
In the prior work, one measures the quality of a CSSP-solution against
\math{\matA_k}, the best rank-\math{k} approximation to \math{\matA} obtained
via the singular value decomposition (SVD), where \math{k} is a 
user specified target rank parameter. For example, \cite{BMD14}
gives efficient algorithms to find \math{\matC} with 
\math{c\approx 2k/\epsilon} columns, for which
\math{\FNorm{\matA-\matC\matC^+\matA}\le (1+\epsilon)
\FNorm{\matA-\matA_k}}.\\

\noindent Our contribution is not to directly attack CSSP. We present a novel 
algorithm that can improve an existing CSSP algorithm by adaptively invoking it,
in a sense \emph{actively} learning which columns to sample next based
on the columns you have already sampled. If you use the CSSP-algorithm from 
\cite{BMD14} as a strawman benchmark, you can obtain 
\math{c} columns all at once and incur an error 
roughly \math{(1+2k/c)\FNorm{\matA-\matA_k}}.
Or, you can invoke the algorithm to obtain, for example,
\math{c/2} columns, and then 
allow the algorithm to \emph{adapt} to the
columns already chosen (for example by modifying~\math{\matA}) before
choosing
the remaining \math{c/2} columns.
We refer to the former as continued sampling and to the latter
as adaptive sampling. We prove performance guarantees which
show that adaptive sampling improves upon continued sampling, and we present
experiments on synthetic and real data that demonstrate significant
empirical performance gains.

\subsection{Notation}\label{sec:notation}
$\matA, \matB, \ldots$ denote matrices and $\a, \b, \ldots$ denote column vectors; $\matI_n$ is the  $n \times n$ identity matrix. $\left[\matA, \matB\right]$ and $\left[\matA; \matB \right]$ denote matrix concatenation operations in a column-wise and row-wise manner, respectively. Given a set $S \subseteq \{1,\ldots n\}$, $\matA_{S}$ is the matrix that contains the columns of $\matA \in \mathbb{R}^{m \times n}$ indexed by $S$. Let \math{\rank(\matA)=\rho\leq \min\left\{m,n\right\}}. The (economy) SVD of $\matA$ is
$$ \matA = \left(\matU_k\; \matU_{\rho-k}\right) \left(\begin{array}{cc}\matSig_k &\mathbf{0} \\ \mathbf{0} &\matSig_{\rho-k} \end{array}\right) \left(\begin{array}{c} \matV_k^T \\ \matV_{\rho-k}^T\end{array} \right)=\sum_{i=1}^\rho \sigma_i(\matA) \u_i\v_i^T$$
where $\matU_k \in \mathbb{R}^{m \times k}$ and $\matU_{\rho-k} \in \mathbb{R}^{m \times (\rho-k)}$ contain the left singular vectors $\u_i$, $\matV_k \in \mathbb{R}^{n \times k}$ and $\matV_{\rho-k} \in \mathbb{R}^{n \times (\rho-k)}$ contain the right singular vectors $\v_i$, and $\matSig \in \mathbb{R}^{\rho \times \rho}$ is a diagonal matrix containing the singular values $\sigma_1(\matA) \geq \ldots \geq \sigma_{\rho}(\matA) > 0.$ The Frobenius norm of $\matA$ is $\FNormS{\matA}= \sum_{i,j}\matA_{ij}^2$; $\bf{Tr}(\matA)$ is the trace of $\matA$; the 
pseudoinverse of $\matA$ is $\matA^+ = \matV \matSig^{-1} \matU^T$; and,
\math{\matA_k}, the best rank-\math{k}
approximation to \math{\matA} under any unitarily invariant norm 
is $\matA_k=\matU_k\matSig_k\matV_k^T=\sum_{i=1}^k\sigma_i\u_i\v_i^T.$

\remove{
We note that in eqn.~(\ref{eqn:pd1}), we can rewrite $\left(\matC \matC^+ \matA\right)_k$ as $\matC \matX$, where $\matX$ has rank at most $k$ (see the Appendix for the proof of the following equation):
\begin{equation}
 \matX = \argmin_{\matPsi \in \mathbb{R}^{c \times n}:\mbox{rank}\left(\matPsi\right)\leq k} \FNormS{\matA -\matC\matPsi}.
\label{eqn:eqnCX}
\end{equation}
}

\subsection{Our Contribution: Adaptive Sampling}\label{subsec:our_contri}
We design a novel CSSP-algorithm that \emph{adaptively} selects
columns from the matrix $\matA$ in \textit{rounds}.
In each round we \textit{remove} from $\matA$ the information that has already been ``captured'' by the columns that have been thus far selected. 
Algorithm~\ref{alg:alg_adaptive} selects $tc$ columns of $\matA$ in $t$ rounds, where in each round $c$ columns of $\matA$ are selected using a relative-error
CSSP-algorithm from prior work.
\\
\begin{algorithm}[htb]
\begin{center}
\parbox{0.81\textwidth}{
\begin{framed}
\textbf{Input:} $\matA \in \mathbb{R}^{m\times n}$; 
target rank $k$; \# rounds $t$; columns per round \math{c} \\
\textbf{Output:} $\matC \in \mathbb{R}^{m\times tc},$
\math{tc} columns of \math{\matA} and \math{S}, the indices of those columns.
\vskip-0.2in
\begin{algorithmic}[1]\itemsep0pt
\STATE $S=\{\}$; $\matE^0 = \matA$ 
\FOR{$\ell = 1,\cdots,t$}
\STATE Sample indices \math{S_\ell} of $c$ columns from $\matE^{\ell-1}$ 
using a CSSP-algorithm.
\STATE $S\gets S \cup S_\ell.$
\STATE Set $\matC= \matA_S$ and $\matE^{\ell} = \matA - (\matC \matC^+ \matA)_{\ell k}.$
\ENDFOR
\RETURN \math{\matC, S}
\end{algorithmic}
\end{framed}
}
\caption{Adaptive Sampling}
\label{alg:alg_adaptive}
\end{center}
\end{algorithm}

\\
At round \math{\ell} in Step 3, we compute
 column indices  $S$ (and  $\matC=\matA_S$)
using a CSSP-algorithm on the 
residual \math{\matE^{\ell-1}} of the previous round. 
To compute this residual,
remove 
from~\math{\matA} the best rank-\math{(\ell-1) k} approximation to 
\math{\matA} in the span of the  
columns selected from the first \math{\ell-1}
rounds,
\mand{\matE^{\ell-1}=\matA-(\matC \matC^+ \matA)_{(\ell-1)k}.}
A similar strategy was developed in~\cite{DV06} with sequential adaptive 
use of  
(additive error) CSSP-algorithms.
These (additive error) CSSP-algorithms select 
columns according to column norms~\cite{FKV98}. 
In~\cite{DV06}, the residual in step 5 is defined differently, as 
\math{\matE^\ell=\matA-\matC \matC^+ \matA}. To motivate our result, it 
helps to take a closer look at the reconstruction 
error \math{\matE=\matA-\matC \matC^+ \matA}
after  
\math{t} adaptive rounds of the strategy in~\cite{DV06}  
with the CSSP-algorithm in~\cite{FKV98}.
\begin{center}\begin{small}
\begin{tabular}{c|p{0.37\textwidth}|p{0.425\textwidth}}
\# rounds
&
{\bf Continued sampling: }\math{tc} columns
using CSSP-algorithm from~\cite{FKV98}.
(\math{\epsilon=k/c})
&
{\bf Adaptive sampling:} \math{t} rounds of the strategy in~\cite{DV06} with 
the CSSP-algorithm from~\cite{FKV98}.
\\\hline&&\\[-7pt]
\math{t=2}&
\multicolumn{1}{c|}{
\math{\displaystyle
\FNorm{\matE}^2
\le
\FNorm{\matA-\matA_k}^2
+
\frac{\epsilon}{2}\FNorm{\matA}^2
}
}
&
\multicolumn{1}{c}{
\math{
\FNorm{\matE}^2
\le
(1+\epsilon)\FNorm{\matA-\matA_k}^2
+
\epsilon^2\FNorm{\matA}^2
}
}
\\[8pt]
\math{t}
&
\multicolumn{1}{c|}{
\math{\displaystyle
\FNorm{\matE}^2
\le
\FNorm{\matA-\matA_k}^2
+
\frac{\epsilon}{t}\FNorm{\matA}^2
}
}
&
\math{\displaystyle
\FNorm{\matE}^2
\le
\left(1+O(\epsilon)\right)\FNorm{\matA-\matA_k}^2
+
\epsilon^t\FNorm{\matA}^2
}
\end{tabular}
\end{small}
\end{center}
Typically
\math{\FNorm{\matA}^2\gg\FNorm{\matA-\matA_k}^2} and  
\math{\epsilon} is small (i.e., \math{c\gg k}), so adaptive sampling 
\`{a} la~\cite{DV06} wins over continued sampling
for additive error CSSP-algorithms. This is especially apparent after 
\math{t} rounds, where continued sampling only attenuates the big term
\math{\norm{\matA}_F^2} by \math{\epsilon/t}, but adaptive sampling 
exponentially attenuates this term by \math{\epsilon^t}.

Recently, powerful CSSP-algorithms
have been developed which give relative-error 
guarantees~\cite{BMD14}. 
We can use the adaptive strategy from~\cite{DV06}
together with 
these newer relative error CSSP-algorithms. If one carries out the analysis
from~\cite{DV06} by replacing the additive error
CSSP-algorithm from ~\cite{FKV98} with 
the relative error CSSP-algorithm in~\cite{BMD14}, the comparison of 
continued and adaptive sampling using the strategy from ~\cite{DV06} becomes 
(\math{t=2} rounds suffices to see the problem):
\begin{center}\begin{small}
\begin{tabular}{c|p{0.37\textwidth}|p{0.425\textwidth}}
\# rounds
&
{\bf Continued sampling:} \math{tc} columns
using CSSP-algorithm from~\cite{BMD14}.
(\math{\epsilon=2k/c})
&
{\bf Adaptive sampling:} \math{t} rounds of the strategy in~\cite{DV06} with 
the CSSP-algorithm from~\cite{BMD14}.
\\\hline&&\\[-7pt]
\math{t=2}&
\multicolumn{1}{c|}{
\math{\displaystyle
\FNorm{\matE}^2
\le
\left(1+\frac\epsilon2\right)\FNorm{\matA-\matA_k}^2
}
}
&
\multicolumn{1}{c}{
\math{\displaystyle
\FNorm{\matE}^2
\le
\left(1+\frac\epsilon2+\frac{\epsilon^2}2\right)\FNorm{\matA-\matA_k}^2
}
}
\end{tabular}
\end{small}
\end{center}
Adaptive sampling from
\cite{DV06} gives a worse theoretical guarantee than continued sampling for  
relative error CSSP-algorithms. In a nutshell, no matter how many
rounds of adaptive sampling you do, the theoretical bound will not be
better than \math{(1+k/c)\norm{\matA-\matA_k}_F^2} if you are using a
relative error CSSP-algorithm. This 
raises an obvious question: is it possible to combine 
relative-error CSSP-algorithms with adaptive sampling 
to get (provably and empirically) improved CSSP-algorithms? 
The approach of~\cite{DV06} does not achieve this objective.
We provide a positive answer to this question.

Our approach is a
subtle modification to the approach 
in~\cite{DV06}: in Step 5 of Algorithm~\ref{alg:alg_adaptive}.
When we compute the residual matrix in round \math{\ell},
we subtract $(\matC \matC^+ \matA)_{\ell k}$ from \math{\matA}, the best 
rank-\math{\ell k} approximation to the projection of \math{\matA}
onto the current columns selected, 
as opposed to subtracting the full projection 
$\matC \matC^+ \matA$. 
This subtle change, is critical in our new analysis which gives 
a tighter bound on the final error, 
allowing us to boost relative-error 
CSSP-algorithms. For \math{t=2} rounds of adaptive sampling, we get a
reconstruction error of 
\mand{
\norm{\matE}_F^2 
\leq (1+\epsilon) \FNormS{\matA - \matA_{2k}}+ \epsilon (1+\epsilon)\FNormS{\matA - \matA_{k}},
}
where \math{\epsilon=2k/c}.
The critical improvement in the bound is that the dominant \math{O(1)}-term
depends on
\math{\norm{\matA - \matA_{2k}}_F^2}, and the dependence on 
\math{\norm{\matA - \matA_{k}}_F^2} is now \math{O(\epsilon)}.
To highlight this improved theoretical bound in an extreme case, consider a matrix $\matA$ that has rank exactly $2k$, then \math{\norm{\matA - \matA_{2k}}_F=0}.
Continued sampling gives an error-bound
\math{(1+\frac{\epsilon}{2}) \norm{\matA-\matA_k}_F^2}, where as
our adaptive sampling gives an error-bound
\math{(\epsilon+\epsilon^2) \norm{\matA-\matA_k}_F^2}, which is clearly
better in this extreme case. 
In practice, data matrices have rapidly decaying singular values, so this 
extreme case is not far from reality (See Figure \ref{fig:spectrum}).

\begin{figure}[!htb]
\begin{center}
\includegraphics[height = 38mm,width=0.45\columnwidth,clip]{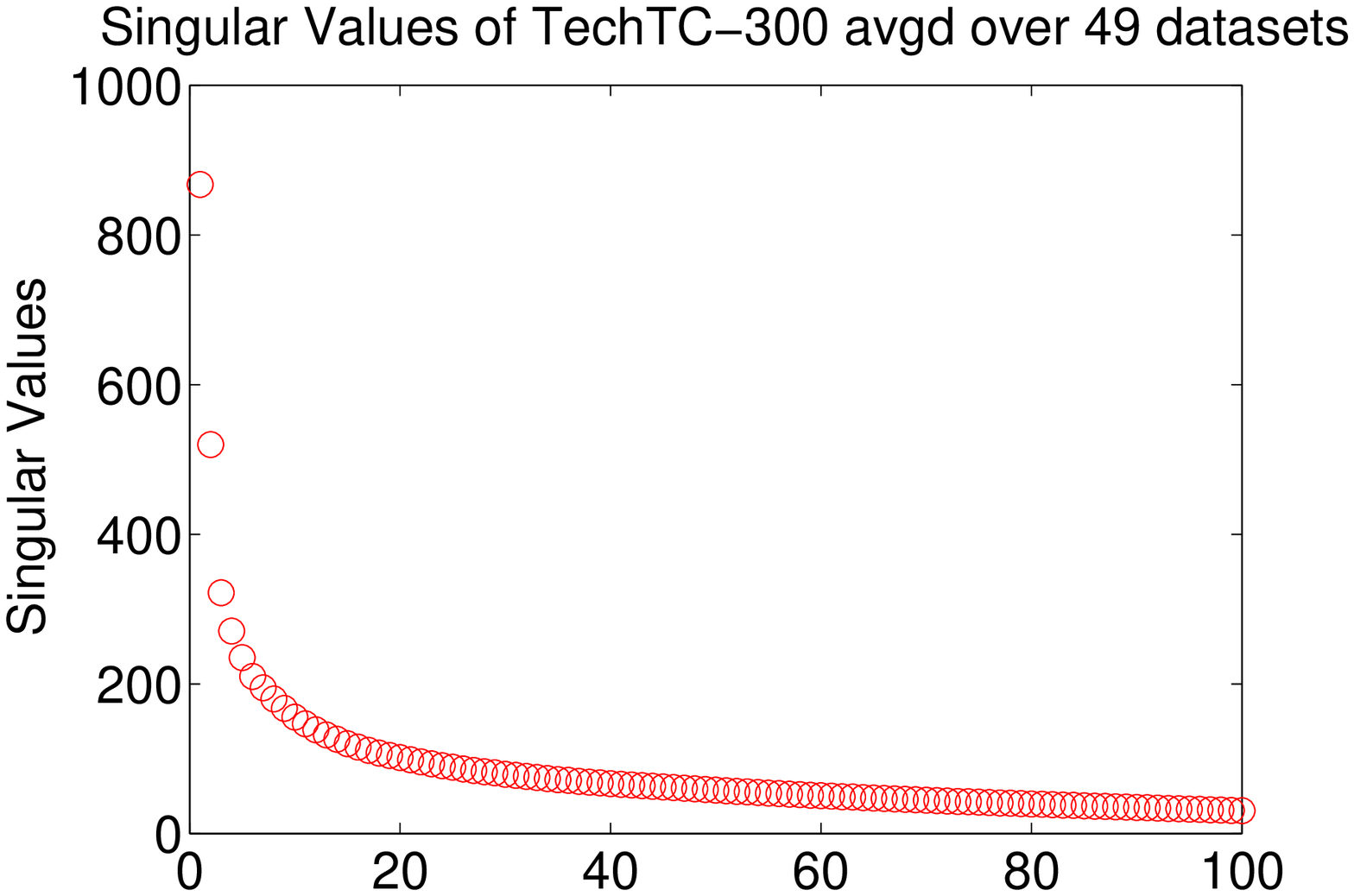}
\includegraphics[height = 38mm,width=0.45\columnwidth,clip]{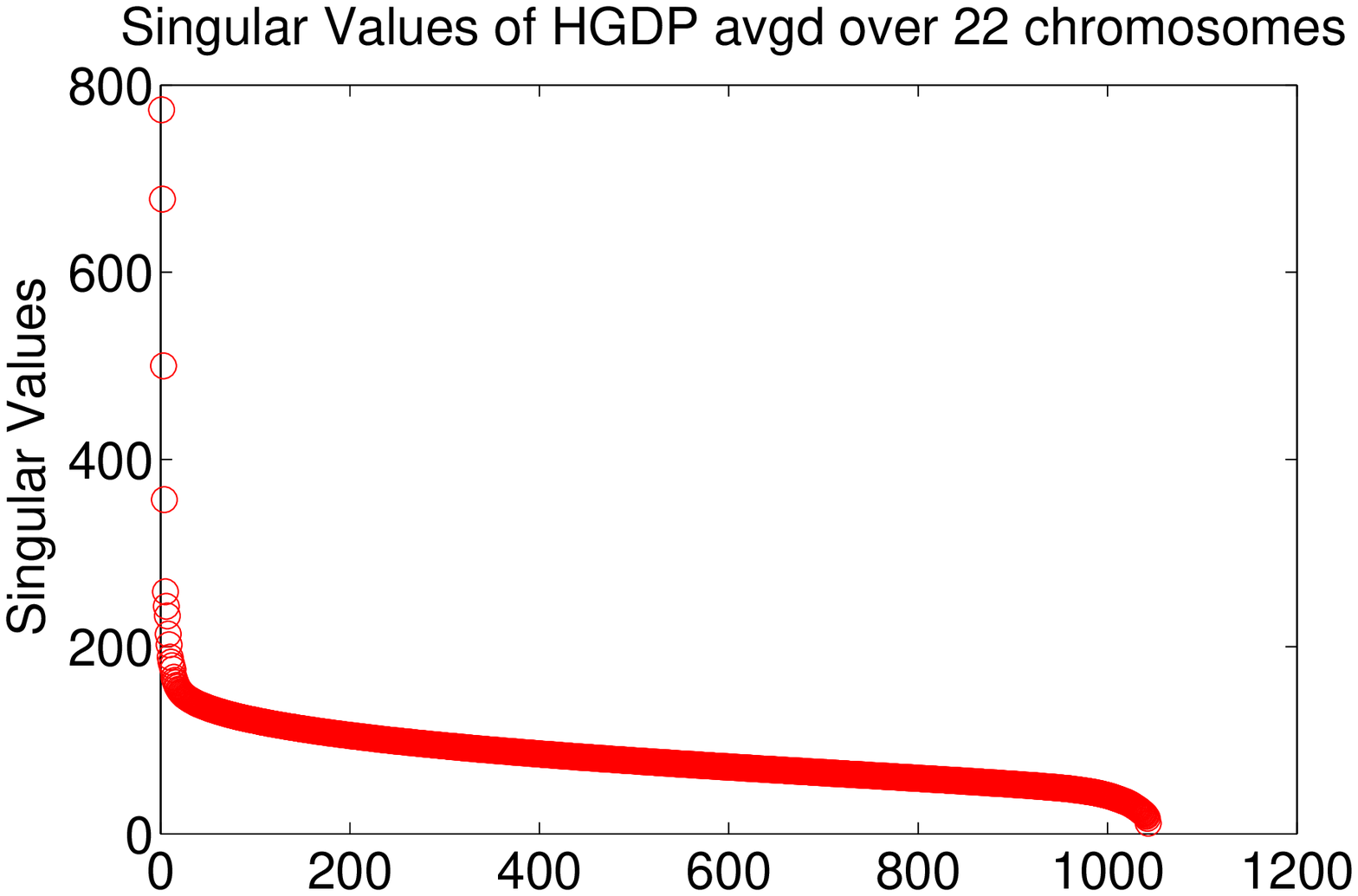}
\end{center}
\caption{\small Figure showing the singular value decay for two real world datasets.}\vskip -0.2cm
\label{fig:spectrum}
\end{figure}

To state our main theoretical result, we need to more formally
define a relative error CSSP-algorithm.
\begin{definition}[Relative Error CSSP-algorithm \math{\cl{A}(\matX,k,c)}]
\label{def:relCSSP}
A relative error CSSP-algorithm \math{\cl{A}} 
takes as input a matrix 
\math{\matX}, a rank parameter \math{k<\rank(\matX)} and a number of columns
\math{c},
and outputs column indices \math{S} with \math{|S|=c},
so that
the columns \math{\matC=\matX_S}
satisfy:
\mand{
\Exp_{\matC}\left[\norm{\matX-(\matC\matC^+\matX)_k}_F^2\right]
\le
(1+\epsilon(c,k)) \norm{\matX-\matX_k}_F^2,
}
where \math{\epsilon(c,k)} depends on \math{\cl{A}} and
the expectation is over random choices made in the algorithm.\footnote{For an additive-error CSSP algorithm,
$\Exp_{\matC}\left[\norm{\matX-(\matC\matC^+\matX)_k}_F^2\right]
\le
\norm{\matX-\matX_k}_F^2+ \epsilon(c,k)\norm{\matX}_F^2.$}
\end{definition}
Our main theorem bounds the reconstruction error 
when our adaptive sampling approach is used to boost \math{\cl{A}}.
The boost in performance depends on the decay of the spectrum of $\matA$.
\begin{theorem}
Let $\matA \in \mathbb{R}^{m\times n}$ be a matrix of rank $\rho$ and let $k <\rho$ be a target rank. If, in Step 3 of Algorithm~\ref{alg:alg_adaptive}, we
use the relative error CSSP-algorithm \math{\cl{A}} with 
\math{\epsilon(c,k)=\epsilon<1}, then
$$\Exp_{\matC}\left[\norm{\matA - (\matC\matC^+\matA)_{tk}}_F^2\right] \leq (1+\epsilon) \FNormS{\matA - \matA_{tk}}+ \epsilon \sum_{i=1}^{t-1} (1+\epsilon)^{t-i}\FNormS{\matA - \matA_{ik}}.$$
\label{thm:thm1}
\end{theorem}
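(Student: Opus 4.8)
The plan is to prove the bound by induction on the number of rounds $t$, after observing that the quantity the theorem controls is exactly $\FNormS{\matE^t}$, the squared residual produced by Algorithm~\ref{alg:alg_adaptive} after the final round (since $\matE^t = \matA - (\matC\matC^+\matA)_{tk}$ with $\matC$ the full set of $tc$ columns). Write $a_i := \FNormS{\matA - \matA_{ik}}$ and $f_\ell := \Exp[\FNormS{\matE^\ell}]$, the expectation being over all random choices through round $\ell$. The base case $t=1$ is immediate from Definition~\ref{def:relCSSP} applied to $\matE^0=\matA$: the round-$1$ columns give $\Exp[\FNormS{\matA-(\matC\matC^+\matA)_k}]\le(1+\epsilon)a_1$, which matches the claim with an empty sum.

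The heart of the argument is a one-round inequality $\Exp[\FNormS{\matE^\ell}\mid\mathcal F_{\ell-1}]\le(1+\epsilon)\FNormS{\matE^{\ell-1}-\matE^{\ell-1}_k}$, conditioning on everything chosen before round $\ell$. To get it, I first recall the standard fact that $(\matC\matC^+\matA)_{\ell k}$ is the \emph{best} rank-$\ell k$ approximation of $\matA$ whose column space lies in $\col(\matC)$; hence $\FNormS{\matE^\ell}$ is bounded by $\FNormS{\matA-\matZ}$ for \emph{any} rank-$\le\ell k$ competitor $\matZ$ with $\col(\matZ)\subseteq\col(\matC^{(\ell)})$, where $\matC^{(\ell)}$ denotes the columns selected through round $\ell$. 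I would construct $\matZ := \matP^{\ell-1} + (\matB_\ell\matB_\ell^+\matE^{\ell-1})_k$, where $\matP^{\ell-1} := (\matC^{(\ell-1)}(\matC^{(\ell-1)})^+\matA)_{(\ell-1)k}$ is the previous round's approximant and $\matB_\ell := (\matE^{\ell-1})_{S_\ell}$ are the $c$ residual-columns sampled in round $\ell$. The key bookkeeping is that $\matZ$ is admissible: its rank is at most $(\ell-1)k+k=\ell k$, and its column space lies in $\col(\matC^{(\ell)})$ because $\col(\matP^{\ell-1})\subseteq\col(\matC^{(\ell-1)})$ and, crucially, $\matB_\ell = \matA_{S_\ell}-(\matP^{\ell-1})_{S_\ell}$ is a sum of genuine columns of $\matC^{(\ell)}$ and vectors in $\col(\matP^{\ell-1})\subseteq\col(\matC^{(\ell)})$. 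Since $\matA-\matZ = \matE^{\ell-1}-(\matB_\ell\matB_\ell^+\matE^{\ell-1})_k$, the optimality of $(\matC^{(\ell)}(\matC^{(\ell)})^+\matA)_{\ell k}$ together with the relative-error guarantee of $\cl A$ applied to $\matE^{\ell-1}$ (at rank $k$) delivers the one-round inequality.

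Since $\FNormS{\matE^{\ell-1}-\matE^{\ell-1}_k}=\FNormS{\matE^{\ell-1}}-\FNormS{\matE^{\ell-1}_k}$, taking full expectations converts this into the recursion $f_\ell \le (1+\epsilon)\bigl(f_{\ell-1}-\Exp[\FNormS{\matE^{\ell-1}_k}]\bigr)$. To extract the improved bound I lower-bound the subtracted top-$k$ energy $\FNormS{\matE^{\ell-1}_k}=\sum_{i=1}^k\sigma_i^2(\matE^{\ell-1})$ using the Weyl-type inequality $\sigma_i(\matA-\matP)\ge\sigma_{i+r}(\matA)$, valid for every rank-$r$ matrix $\matP$. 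Applied with $\matP=\matP^{\ell-1}$ and $r=(\ell-1)k$ this gives, \emph{deterministically}, $\FNormS{\matE^{\ell-1}_k}\ge\sum_{j=(\ell-1)k+1}^{\ell k}\sigma_j^2(\matA)=a_{\ell-1}-a_\ell$. Substituting yields $f_\ell\le(1+\epsilon)f_{\ell-1}-(1+\epsilon)(a_{\ell-1}-a_\ell)$, into which I plug the induction hypothesis for $f_{\ell-1}$; a short rearrangement (the cancellation $(1+\epsilon)^2a_{\ell-1}-(1+\epsilon)a_{\ell-1}=\epsilon(1+\epsilon)a_{\ell-1}$ is what regenerates the geometric weights) reproduces exactly the claimed expression with the summation index advanced from $\ell-1$ to $\ell$.

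I expect the main obstacle to lie in the two steps that exploit the algorithm's ``subtle modification'' of subtracting the rank-$\ell k$ truncation rather than the full projection. First, the rank-$k$ truncation inside the competitor $\matZ$ is what keeps $\rank(\matZ)\le\ell k$ while simultaneously matching the exact form of the relative-error guarantee; verifying the column-space containment (that the residual columns of $\matE^{\ell-1}$ still live in $\col(\matC^{(\ell)})$) is the delicate part. Second, and this is where the genuine gain over~\cite{DV06} originates, is the Weyl step: it is precisely because $\matP^{\ell-1}$ has controlled rank $(\ell-1)k$ that the subtracted energy can be charged against the \emph{next} block of singular values $\sigma_{(\ell-1)k+1},\dots,\sigma_{\ell k}$ of $\matA$, which is what moves the $O(1)$ coefficient onto $a_t=\FNormS{\matA-\matA_{tk}}$ and demotes the dependence on the larger tails $a_i$ with $i<t$ to order $\epsilon$.
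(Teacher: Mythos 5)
Your proposal is correct and follows essentially the same route as the paper: induction on the number of rounds, the one-round relative-error guarantee combined with the optimality of the truncated projection $(\matC\matC^+\matA)_{\ell k}$ over rank-$\le\ell k$ competitors in $\col(\matC)$, and the Weyl-type shift $\sigma_i(\matA-\matY)\ge\sigma_{r+i}(\matA)$ (the paper's Lemma~\ref{lem:lem1_adp}) to charge the subtracted top-$k$ energy against $\sigma_{(\ell-1)k+1}(\matA),\dots,\sigma_{\ell k}(\matA)$. Your explicit competitor $\matZ=\matP^{\ell-1}+(\matB_\ell\matB_\ell^+\matE^{\ell-1})_k$ and the column-space bookkeeping for $\matB_\ell$ just spell out the step the paper states more tersely, so the two arguments coincide.
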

{\bf Comments.}
\begin{enumerate}[{\bf 1.}]\itemsep0pt
\item 
The dominant \math{O(1)}
term in our bound is $\FNorm{\matA-\matA_{tk}}$, 
not $\FNorm{\matA-\matA_k}$. This is a major improvement 
since the former is typically \textit{much smaller} than the latter
in real data. 
Further, we need a bound on the reconstruction error 
\math{\norm{\matA -\matC\matC^+\matA}_F}.
Our theorem give a stronger result than
needed because
\math{\norm{\matA -\matC\matC^+\matA}_F
\le\norm{\matA - (\matC\matC^+\matA)_{tk}}_F}.
\item 
We presented our result for the case of a relative error CSSP-algorithm
with a guarantee on the expected reconstruction error. Clearly, if the
CSSP-algorithm is deterministic, then Theorem~\ref{thm:thm1} will also hold
deterministically. The result in Theorem~\ref{thm:thm1} can also be boosted
to hold with high probability, by repeating the 
process \math{\log\frac1\delta} times and picking the columns which 
performed best. Then, with probability at least \math{1-\delta}, 
$$\norm{\matA - (\matC\matC^+\matA)_{tk}}_F^2 \leq (1+2\epsilon) \FNormS{\matA - \matA_{tk}}+ 2\epsilon \sum_{i=1}^{t-1} (1+\epsilon)^{t-i}\FNormS{\matA - \matA_{ik}}.$$
If the CSSP-algorithm itself only gives a high-probability (at least
\math{1-\delta}) guarantee, then the bound in 
Theorem~\ref{thm:thm1} also holds with high probability, at least
\math{1-t\delta}, which is obtained by applying a union bound
to the probability of failure in each round.
\item
Our results hold for any relative error CSSP-algorithm combined 
with our adaptive 
sampling strategy. 
The relative error CSSP-algorithm in \cite{BMD14} has 
\math{\epsilon(c,k)\approx 2k/c}. The relative error CSSP-algorithm in
\cite{DMM06ar} has \math{\epsilon(c,k)=O(k\log k/c)}. 
Other algorithms can be found in~\cite{DV06,DRVW06,Sinop}. We presented the 
simplest form of the result, which can be generalized to sample a different
number of columns in each round, or even use a different CSSP-algorithm in 
each round. We have not optimized the sampling schedule, i.e. how many columns
to sample in each round.
At the moment, this is largely dictated by the CSSP algorithm itself, which
requires a minimum number of samples in each round to give a theoretical
guarantee. From the empirical perspective (for example using
leverage score sampling to select columns), 
strongest performance may be obtained by adapting after every column 
is selected.
\item 
In the context of the additive error 
CSSP-algorithm from~\cite{FKV98}, our adaptive sampling strategy
 gives a theoretical performance 
guarantee which is at least as good as the adaptive sampling strategy
from~\cite{DV06}.
\end{enumerate}
Lastly, we also provide the \textit{first} empirical evaluation of adaptive sampling algorithms. We implemented our algorithm using two relative-error column selection algorithms (the near-optimal column selection algorithm of~\cite{BMD11,BMD14} and the leverage-score sampling algorithm of~\cite{DrineasrelativeCUR}) and compared it against the adaptive sampling algorithm of~\cite{DV06} on synthetic and real-world data. The experimental results show that our algorithm outperforms prior approaches.

\subsection{Related Work} \label{subsec:rw}

Column selection algorithms have been extensively studied in prior literature. Such algorithms include rank-revealing QR factorizations \cite{CH92,ChanQR}
for which only weak performance guarantees can be derived. The QR approach was improved in \cite{maung2013pass} where the authors proposed a memory efficient implementation. The randomized additive error CSSP-algorithm~\cite{FKV98} was a breakthrough, 
which led to a series of improvements producing relative CSSP-algorithms using
a variety of randomized and deterministic techniques.
These include leverage score sampling~\cite{DrineasrelativeCUR,DMM06ar}, volume sampling~\cite{DV06,DRVW06,Sinop}, the two-stage hybrid sampling approach of~\cite{BMD09}, the near-optimal column selection algorithms of~\cite{BMD11,BMD14}, as well as deterministic variants presented in~\cite{Boutsi14}.
We refer the reader to Section~1.5 of~\cite{BMD14} for a detailed overview of prior work. Our focus is not on CSSP-algorithms \emph{per se}, but rather
on adaptively invoking existing CSSP-algorithms. The only 
prior adaptive sampling with a provable guarantee 
was introduced in~\cite{DV06} and 
further analyzed in~\cite{Deshpande06soda,DRVW06,DM07}; this strategy 
is specifically boosts the additive error CSSP-algorithm, but does not
work with relative error CSSP-algorithms which are currently in use.
Our modification of the approach in~\cite{DV06}
is delicate, but crucial to the new
analysis we perform in the context of relative error CSSP-algorithms.
 
Our work is motivated by relative error CSSP-algorithms satisfying
definition~\ref{def:relCSSP}. Such algorithms exist which give 
expected guarantees~\cite{BMD14} as well as 
high probability guarantees~\cite{DrineasrelativeCUR}. 
Specifically, 
given $\matX \in \mathbb{R}^{m\times n}$ and a target rank $k$, 
the leverage-score sampling approach of~\cite{DrineasrelativeCUR} selects 
$c=O\left(\left(k/\epsilon^2\right)\log\left(k/\epsilon^2\right)\right)$
columns of $\matA$ to form a matrix $\matC \in \mathbb{R}^{m\times c}$ 
to give a \math{(1+\epsilon)}-relative error  
with probability at least $1-\delta$. Similarly,~\cite{BMD11,BMD14} 
proposed near-optimal relative error CSSP-algorithms selecting 
\math{c\approx 2c/\epsilon} columns and giving a 
\math{(1+\epsilon)}-relative error guarantee in expectation, 
which can be boosted to a
high probability guarantee via independent repetition.

\section{Proof of Theorem~\ref{thm:thm1}}

We now prove the main result which analyzes 
the performance of our adaptive sampling in 
Algorithm~\ref{alg:alg_adaptive} for a relative error CSSP-algorithm.
We will need the following linear algebraic 
Lemma.
\begin{lemma}\label{lem:lem1_adp}
Let $\matX,\matY \in \mathbb{R}^{m\times n}$ and suppose that
\math{\rank(\matY)=r}. Then, 
\mand{
\sigma_i(\matX-\matY)\ge \sigma_{r+i}(\matX).
}
\end{lemma}
\begin{proof}
Observe that
\math{\sigma_i(\matX-\matY)=\norm{(\matX-\matY)-(\matX-\matY)_{i-1}}_2}.
The claim is now 
immediate from the Eckart-Young theorem because 
\math{\matY+(\matX-\matY)_{i-1}} has rank at most 
\math{r+i-1}, therefore
\mand{\sigma_i(\matX-\matY)=\norm{\matX-(\matY+(\matX-\matY)_{i-1})}_2
\ge \norm{\matX-\matX_{r+i-1}}_2=\sigma_{r+i}(\matX).}
\end{proof}
\remove{
\begin{corollary}\label{cor:lem1_adp}
Let $\matX,\matY \in \mathbb{R}^{m\times n}$ and suppose that
\math{\rank(\matY)=r}. Then,
\mand{
\FNormS{\matX-\matX_{\tau}} - \sum_{i=1}^r \sigma_i^2 (\matX-\matY) \leq 
\FNormS{\matX-\matX_{\tau+r}}.
}
\end{corollary}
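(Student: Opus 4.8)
The plan is to reduce the claim to a purely scalar inequality between two blocks of squared singular values of $\matX$, and then feed in Lemma~\ref{lem:lem1_adp} term by term. First I would rewrite each Frobenius quantity as a tail sum: since $\matX_\tau$ and $\matX_{\tau+r}$ are the optimal rank-$\tau$ and rank-$(\tau+r)$ approximations, $\FNormS{\matX-\matX_\tau}=\sum_{i>\tau}\sigma_i^2(\matX)$ and $\FNormS{\matX-\matX_{\tau+r}}=\sum_{i>\tau+r}\sigma_i^2(\matX)$. Subtracting, the corollary is equivalent to the scalar statement $\sum_{i=\tau+1}^{\tau+r}\sigma_i^2(\matX)\le\sum_{i=1}^r\sigma_i^2(\matX-\matY)$; that is, the energy of $\matX$ in the $r$ directions ranked $\tau+1,\dots,\tau+r$ must be dominated by the energy of the residual $\matX-\matY$ in its top $r$ directions.

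Next I would apply Lemma~\ref{lem:lem1_adp} directly. Since $\rank(\matY)=r$, the lemma gives $\sigma_i(\matX-\matY)\ge\sigma_{r+i}(\matX)$ for every $i\ge1$; squaring (both sides are nonnegative) and summing over $i=1,\dots,r$ yields $\sum_{i=1}^r\sigma_i^2(\matX-\matY)\ge\sum_{i=1}^r\sigma_{r+i}^2(\matX)=\sum_{j=r+1}^{2r}\sigma_j^2(\matX)$. It then remains to compare the block of squared singular values of $\matX$ at positions $\tau+1,\dots,\tau+r$ against the block at positions $r+1,\dots,2r$. Because the singular values are non-increasing, each of the $r$ terms in the former block is bounded by the corresponding term in the latter exactly when $\tau\ge r$, giving $\sum_{i=\tau+1}^{\tau+r}\sigma_i^2(\matX)\le\sum_{j=r+1}^{2r}\sigma_j^2(\matX)$ and closing the chain of inequalities.

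The hard part — and the reason the statement is delicate — is precisely this final index-alignment step. Lemma~\ref{lem:lem1_adp} only ever produces a lower bound anchored to the singular values of $\matX$ starting at index $r+1$, whereas the quantity to be controlled lives at indices starting at $\tau+1$; monotonicity reconciles the two only in the regime $\tau\ge r$. For $\tau<r$ the inequality can genuinely fail (e.g. $\matX=\mathrm{diag}(3,2,1)$, $\matY=\mathrm{diag}(3,0,0)$, $\tau=0$, $r=1$ gives $10\le5$, which is false), so the careful part of the write-up is to make the hypothesis $\tau\ge r$ explicit and to verify the termwise monotone domination on which the last step rests. I expect no further obstacle once the reduction of the first paragraph is in place, since everything else is a single invocation of the already-proved lemma followed by monotonicity.
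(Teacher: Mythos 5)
Your proof is correct and takes essentially the same route as the paper's: rewrite both Frobenius norms as tail sums $\sum_{i>\tau}\sigma_i^2(\matX)$ and $\sum_{i>\tau+r}\sigma_i^2(\matX)$, and lower-bound the subtracted block $\sum_{i=1}^r\sigma_i^2(\matX-\matY)$ via Lemma~\ref{lem:lem1_adp}. The only divergence is worth flagging: the paper's proof replaces $\sigma_i^2(\matX-\matY)$ by $\sigma_{\tau+i}^2(\matX)$ and attributes this to Lemma~\ref{lem:lem1_adp}, but with $\rank(\matY)=r$ the lemma only yields $\sigma_i(\matX-\matY)\ge\sigma_{r+i}(\matX)$, so that step (exactly like your final index-alignment) silently uses $r\le\tau$. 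Your observation that the statement is false without the hypothesis $\tau\ge r$, backed by the explicit counterexample $\matX=\mathrm{diag}(3,2,1)$, $\matY=\mathrm{diag}(3,0,0)$, $\tau=0$, $r=1$, is a genuine catch that the paper's write-up glosses over; the missing hypothesis is harmless in the intended application (where $\matY$ is a rank-$tk$ truncation and only $k\le tk$ top singular values of the residual are removed), but it should be stated in the corollary.
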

\begin{proof} Let \math{\rho=\rank(\matX)}. Then,
\eqan{
\FNormS{\matX-\matX_{\tau}}-\sum_{i=1}^r \sigma_i^2 (\matX-\matY) 
&=&
\sum_{i=\tau+1}^{\rho}\sigma_i^2(\matX)-\sum_{i=1}^r \sigma_i^2 (\matX-\matY) 
\\
&\le&
\sum_{i=\tau+1}^{\rho}\sigma_i^2(\matX)-\sum_{i=1}^r \sigma_{\tau+i}^2 (\matX)
\\
&=&
\sum_{i=\tau+r+1}^{\rho}\sigma_i^2(\matX)
=
\FNormS{\matX-\matX_{\tau+r}}.
}
(The inequality follows from Lemma~\ref{lem:lem1_adp}.)
\end{proof}
}
We are now ready to prove Theorem~\ref{thm:thm1}
by induction on \math{t}, the number of rounds of adaptive sampling. When
\math{t=1}, the claim is that
$$\Exp\left[\norm{\matA - (\matC\matC^+\matA)_{k}}_F^2\right] 
\leq (1+\epsilon) \FNormS{\matA - \matA_{k}},$$
which is immediate from  the 
definition of the relative error CSSP-algorithm.
Now for the induction.
Suppose that after \math{t} rounds, columns \math{\matC^t} are selected,
and we have the induction hypothesis that 
\mld{
\Exp_{\matC^t}\left[\norm{\matA - (\matC^t\matC^{t+}\matA)_{tk}}_F^2\right] 
\leq (1+\epsilon) \FNormS{\matA - \matA_{tk}}+ \epsilon \sum_{i=1}^{t-1} (1+\epsilon)^{t-i}\FNormS{\matA - \matA_{ik}}.
\label{eq:indhyp}
}
In the \math{(t+1)}th round, we use the residual 
\math{\matE^t=\matA - (\matC^t\matC^{t+}\matA)_{tk}} to select new columns 
\math{\matC'}. Our relative error CSSP-algorithm \math{\cl{A}} 
gives the following guarantee:
\eqar{
\Exp_{\matC'}\left[\left.
\norm{\matE^t - (\matC'{\matC'}^+\matE^t)_{k}}_F^2\right|\matE^t\right] 
&\leq& 
(1+\epsilon) \FNormS{\matE^t - \matE^t_{k}} \nonumber\\
&=&
(1+\epsilon)\left(\FNormS{\matE^t}-\sum_{i=1}^k\sigma_i^2(\matE^t)\right)
\nonumber\\
&\leq& 
(1+\epsilon)\left(\FNormS{\matE^t}-\sum_{i=1}^k\sigma_{tk+i}^2(\matA)\right).
\label{eq:crucial}
}
(The last step follows because 
\math{\sigma_i^2(\matE^t)=\sigma_i^2(\matA - (\matC^t\matC^{t+}\matA)_{tk})}
and we can apply 
Lemma~\ref{lem:lem1_adp} with \math{\matX=\matA}, 
\math{\matY=(\matC^t\matC^{t+}\matA)_{tk}} and \math{r=\rank(\matY)=tk},
to obtain \math{\sigma_i^2(\matE^t)\ge \sigma_{tk+i}^2(\matA)}.)
We now take the expectation of both sides with respect to the columns 
\math{\matC^t},
\eqar{
&&
\Exp_{\matC^t}\left[\Exp_{\matC'}\left[\left.
\norm{\matE^t - (\matC'{\matC'}^+\matE^t)_{k}}_F^2\right|\matE^t\right] 
\right]
\nonumber\\
&\le&
(1+\epsilon)\left(\Exp_{\matC^t}\left[\FNormS{\matE^t}\right]-\sum_{i=1}^k\sigma_{tk+i}^2(\matA)\right).
\nonumber\\
&\buildrel (a)\over \le&
(1+\epsilon)^2\norm{\matA - \matA_{tk}}_F^2
+
\epsilon \sum_{i=1}^{t-1} (1+\epsilon)^{t+1-i}\FNormS{\matA - \matA_{ik}}
-(1+\epsilon)\sum_{i=1}^k\sigma_{tk+i}^2(\matA)
\nonumber\\
&=&
(1+\epsilon)\left(\norm{\matA - \matA_{tk}}_F^2-\sum_{i=1}^k\sigma_{tk+i}^2(\matA)
\right)
+
\epsilon(1+\epsilon)\norm{\matA - \matA_{tk}}_F^2
\nonumber\\
&&+
\epsilon \sum_{i=1}^{t-1} (1+\epsilon)^{t+1-i}\FNormS{\matA - \matA_{ik}}
\nonumber\\
&=&
(1+\epsilon)\norm{\matA - \matA_{(t+1)k}}_F^2+
\epsilon \sum_{i=1}^{t} (1+\epsilon)^{t+1-i}\FNormS{\matA - \matA_{ik}}
\label{eq:RHS}
}
(a) follows, because of the induction hypothesis (eqn.~\ref{eq:indhyp}).
The columns chosen after round \math{t+1} are 
\math{\matC^{t+1}=[\matC^t,\matC']}. By the law of iterated expectation,
\mand{
\Exp_{\matC^t}\left[\Exp_{\matC'}\left[\left.
\norm{\matE^t - (\matC'{\matC'}^+\matE^t)_{k}}_F^2\right|\matE^t\right] 
\right]
=
\Exp_{\matC^{t+1}}\left[
\norm{\matE^t - (\matC'{\matC'}^+\matE^t)_{k}}_F^2\right].
}
Observe that 
\math{\matE^t - (\matC'{\matC'}^+\matE^t)_{k}
=
\matA - (\matC^t\matC^{t+}\matA)_{tk}- (\matC'{\matC'}^+\matE^t)_{k}
=
\matA-\matY}, where \math{\matY} is in the column space 
of \math{\matC^{t+1}=[\matC^t,\matC']};
further,
\math{\rank(\matY)\le(t+1)k}. Since
\math{(\matC^{t+1}{\matC^{t+1}}^+\matA)_{(t+1)k}} is the best 
rank-\math{(t+1)k} approximation to \math{\matA} in the column space 
of \math{\matC^{t+1}}, for any realization of
\math{\matC^{t+1}},
\mld{
\norm{\matA-(\matC^{t+1}{\matC^{t+1}}^+\matA)_{(t+1)k}}_F^2\le
\norm{\matE^t - (\matC'{\matC'}^+\matE^t)_{k}}_F^2.
\label{eq:LHS}
}
Combining \r{eq:LHS} with \r{eq:RHS}, we have that
\begin{eqnarray}
&&\Exp_{\matC^{t+1}}\left[\norm{\matA-(\matC^{t+1}{\matC^{t+1}}^+\matA)_{(t+1)k}}_F^2
\right] \nonumber \\
&\le&
(1+\epsilon)\norm{\matA - \matA_{(t+1)k}}_F^2+
\epsilon \sum_{i=1}^{t} (1+\epsilon)^{t+1-i}\FNormS{\matA - \matA_{ik}}.\nonumber
\end{eqnarray}
This is the desired bound after \math{t+1} rounds, concluding the induction.
\qed

It is instructive to understand where our new adaptive
sampling strategy
is needed for the proof to go through.
The crucial step is \r{eq:crucial} where we use 
Lemma~\ref{lem:lem1_adp} -- it is essential that
the residual was a low-rank perturbation of \math{\matA}.

\section{Experiments}
We compared three adaptive column sampling
methods, using two real and two synthetic data sets.\footnote{{\bf ADP-Nopt:}
has two stages. The first stage is a 
deterministic dual set spectral-Frobenius column selection in which ties could
occur. We break ties in favor of the column not already selected with the 
maximum norm.}
\\
\underline{Adaptive Sampling Methods}\\[2pt]
{\bf ADP-AE:} The prior adaptive method 
which uses the additive error CSSP algorithm~\cite{DV06}.
\\
{\bf ADP-LVG:} Our new adaptive method using the relative error 
CSSP algorithm~\cite{DrineasrelativeCUR}.
\\
{\bf ADP-Nopt:} Our adaptive method using the near optimal
relative error 
CSSP algorithm~\cite{BMD14}.
\\
%
\underline{Data Sets}\\[2pt]
{\bf HGDP 22 chromosomes:}
SNPs human chromosome data from the HGDP database \cite{Pasch10}. 
We use all 22 chromosome matrices (1043 rows; 7,334-37,493 columns) 
and report the average.
Each matrix contains $+1,0,-1$ entries, and we randomly filled in
missing entries.
\\
{\bf TechTC-300:}
49 document-term matrices \cite{David04} (150-300 rows (documents);  
10,000-40,000 columns (words)). We kept 5-letter or larger words and
report averages over 49 data-sets.
\\
{\bf Synthetic 1:} Random \math{1000\times 10000} matrices with 
$\sigma_i = i^{-0.3}$ (power law).
\\
{\bf Synthetic 2:} Random \math{1000\times 10000} matrices with 
$\sigma_i=\exp^{(1-i)/10}$ (exponential).
\\
\begin{figure}[t]
\begin{center}
\includegraphics[width=0.45\columnwidth,clip]{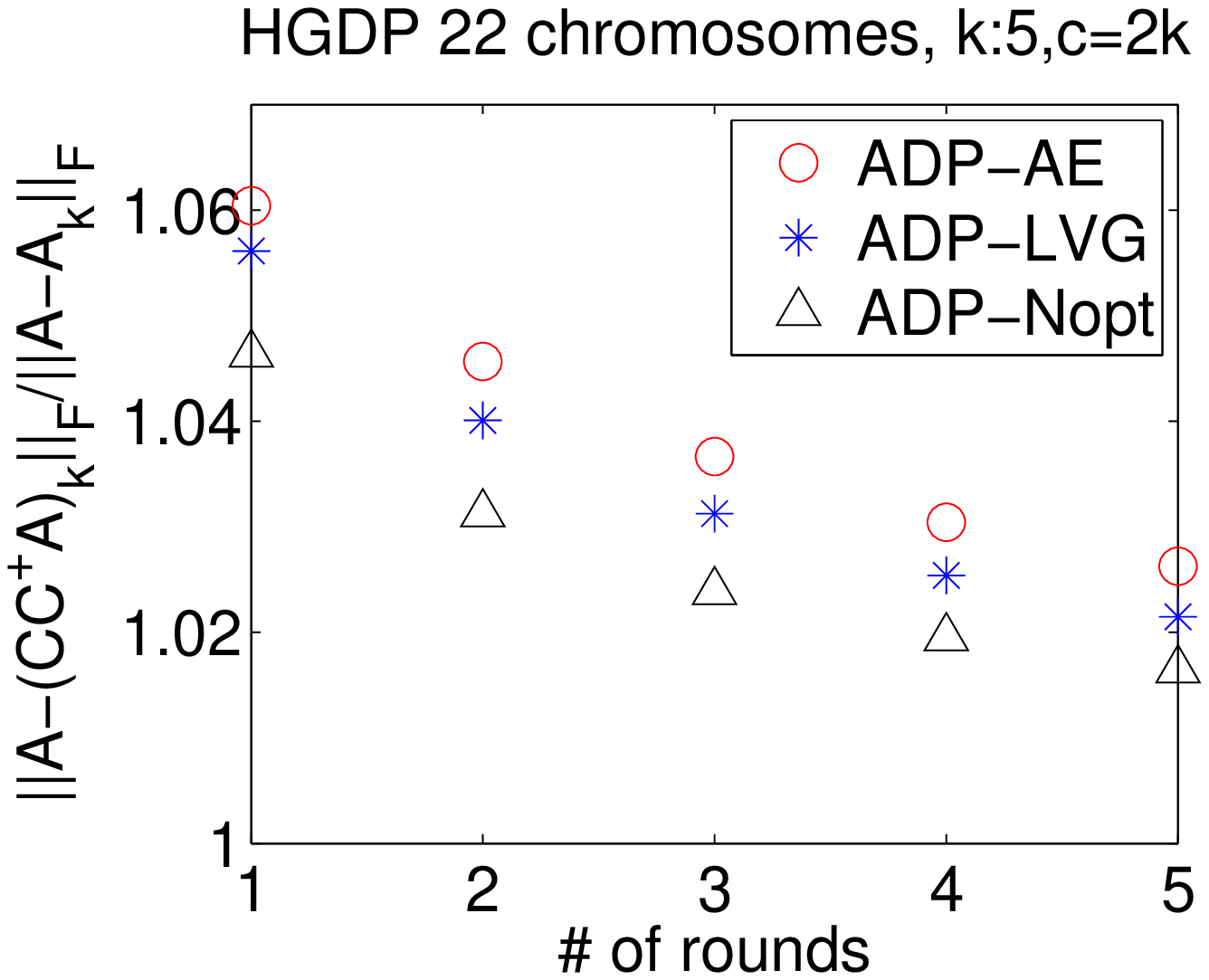}
\hspace*{0.25in}
\includegraphics[width=0.45\columnwidth,clip]{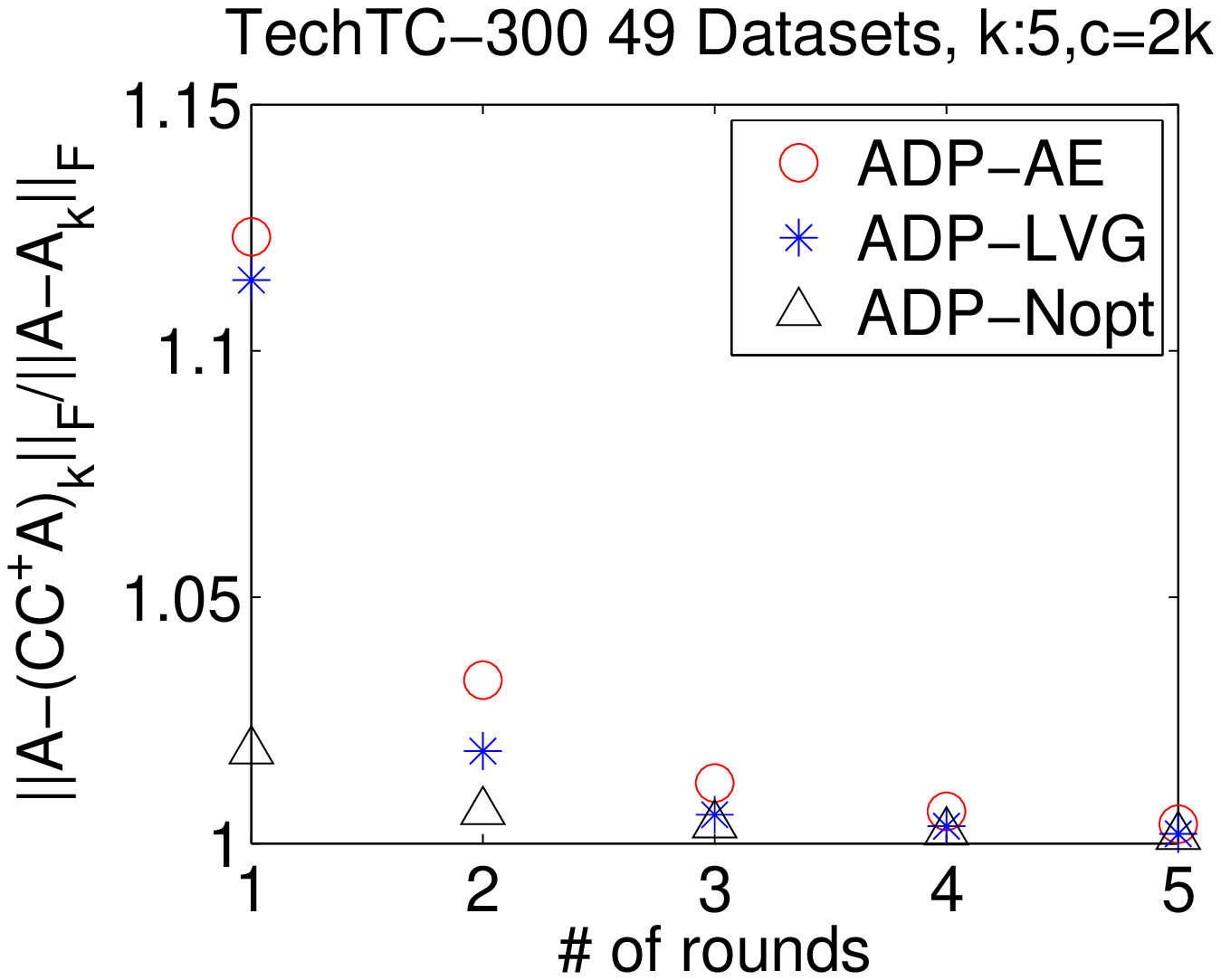}
\\[5pt]
\includegraphics[width=0.45\columnwidth,clip]{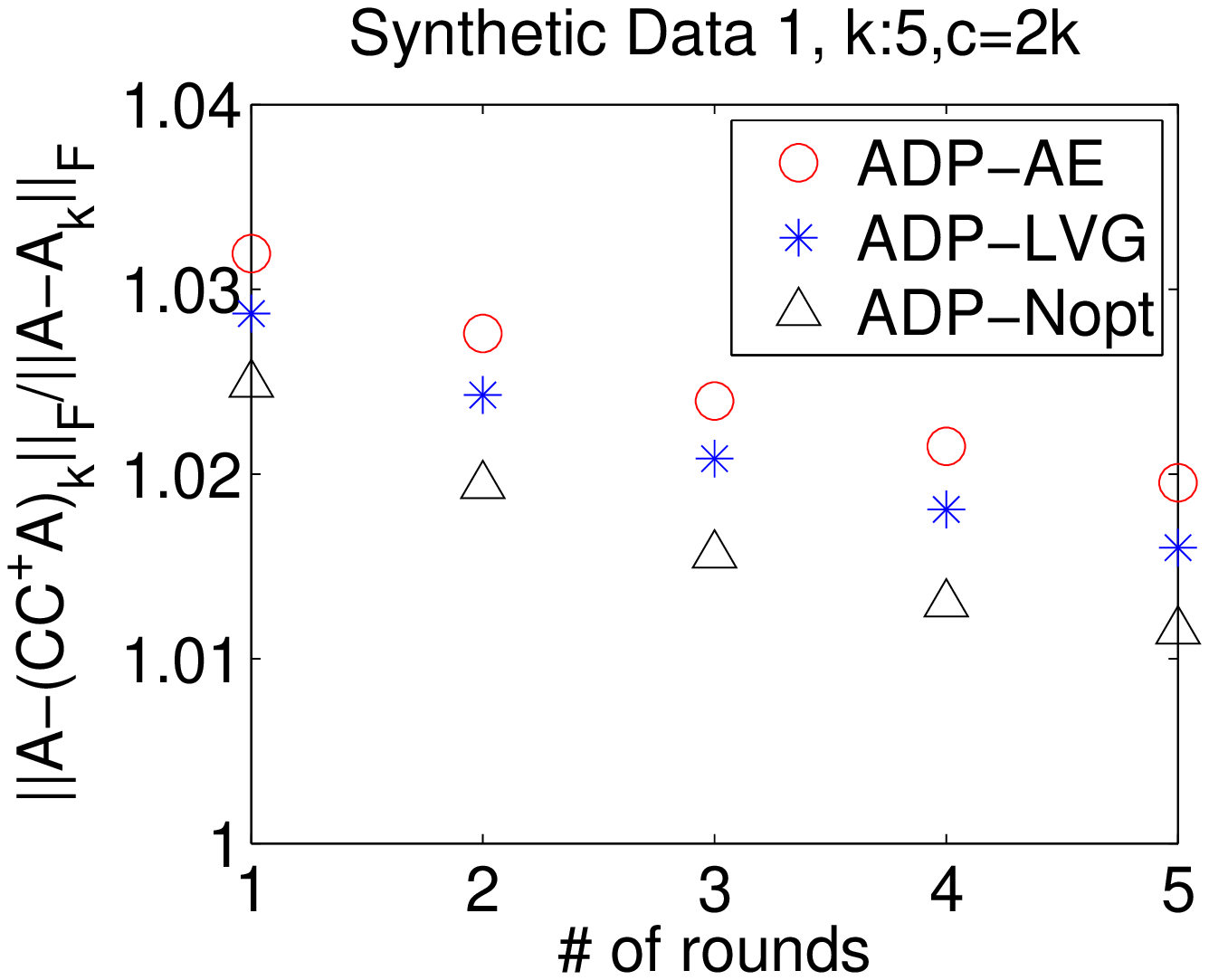}
\hspace*{0.25in}
\includegraphics[width=0.45\columnwidth,clip]{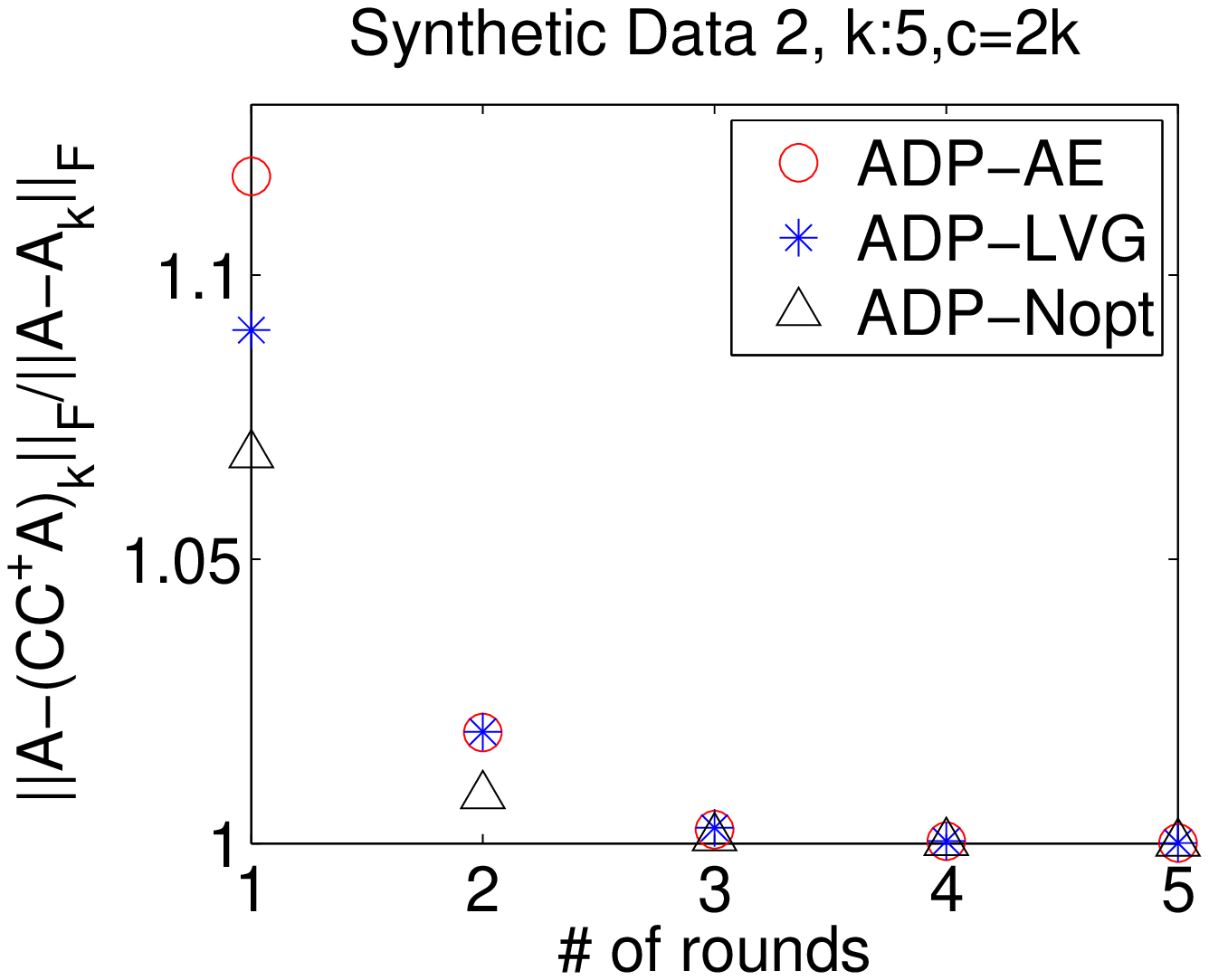} 
\end{center} \vskip -0.2cm
\caption{\small Plots of relative error ratio 
$\FNorm{\matA -(\matC\matC^+\matA)_k}/\FNorm{\matA -\matA_k}$
for various adaptive sampling algorithms for \math{k=5} and
 $c=2k.$ In all cases, performance improves with 
more rounds of sampling, and rapidly converges to a relative 
reconstruction error of 1. This is most so in data matrices with
singular values that decay quickly (such as TectTC and Synthetic 2). 
The HGDP singular values decay slowly because missing entries are selected randomly, and Synthetic 1 has slowly decaying power-law singular values by construction.}
\label{fig:plots}
\end{figure}

For randomized algorithms, we repeat the experiments five times and 
take the average. 
We use the synthetic data sets to provide a controlled environment 
in which we can see performance for different types of singular value
spectra on very large matrices.
In prior work it is common to report on the quality of the columns selected
\math{\matC} by comparing the best rank-\math{k} approximation within 
the column-span of \math{\matC} to \math{\matA_k}. Hence, we report 
the relative error
$\FNorm{\matA -(\matC\matC^+\matA)_k}/\FNorm{\matA -\matA_k}$ when 
comparing the algorithms. We set the target rank $k=5$ and 
the number of columns in each round to \math{c=2k}.
We have tried several choices for \math{k} and \math{c} and the results 
are qualitatively identical so we only report on one choice.
Our first set of results in Figure~\ref{fig:plots} is to compare the prior
adaptive algorithm {\bf ADP-AE} with the new adaptive ones 
{\bf ADP-LVG} and 
{\bf ADP-Nopt} which boose relative error CSSP-algorithms. Our two new 
algorithms are both performing better the prior existing adaptive sampling
algorithm. Further, {\bf ADP-Nopt} is performing better than {\bf ADP-LVG}, and
this is also not surprising, because  {\bf ADP-Nopt} produces near-optimal
columns -- if you boost a better CSSP-algorithm, you get better results.
Further, by comparing the performance on Synthetic 1 with Synthetic 2, we see
that our algorithm (as well as prior algorithms) gain significantly
in performance for rapidly decaying singular values; our new theoretical
analysis reflects this behavior, whereas prior results do not.

\begin{figure}[!htb]
\begin{center}
\includegraphics[height=45mm,width=0.45\columnwidth,clip]{hgdp_c2k_k5.eps}
\includegraphics[height=45mm,width=0.45\columnwidth,clip]{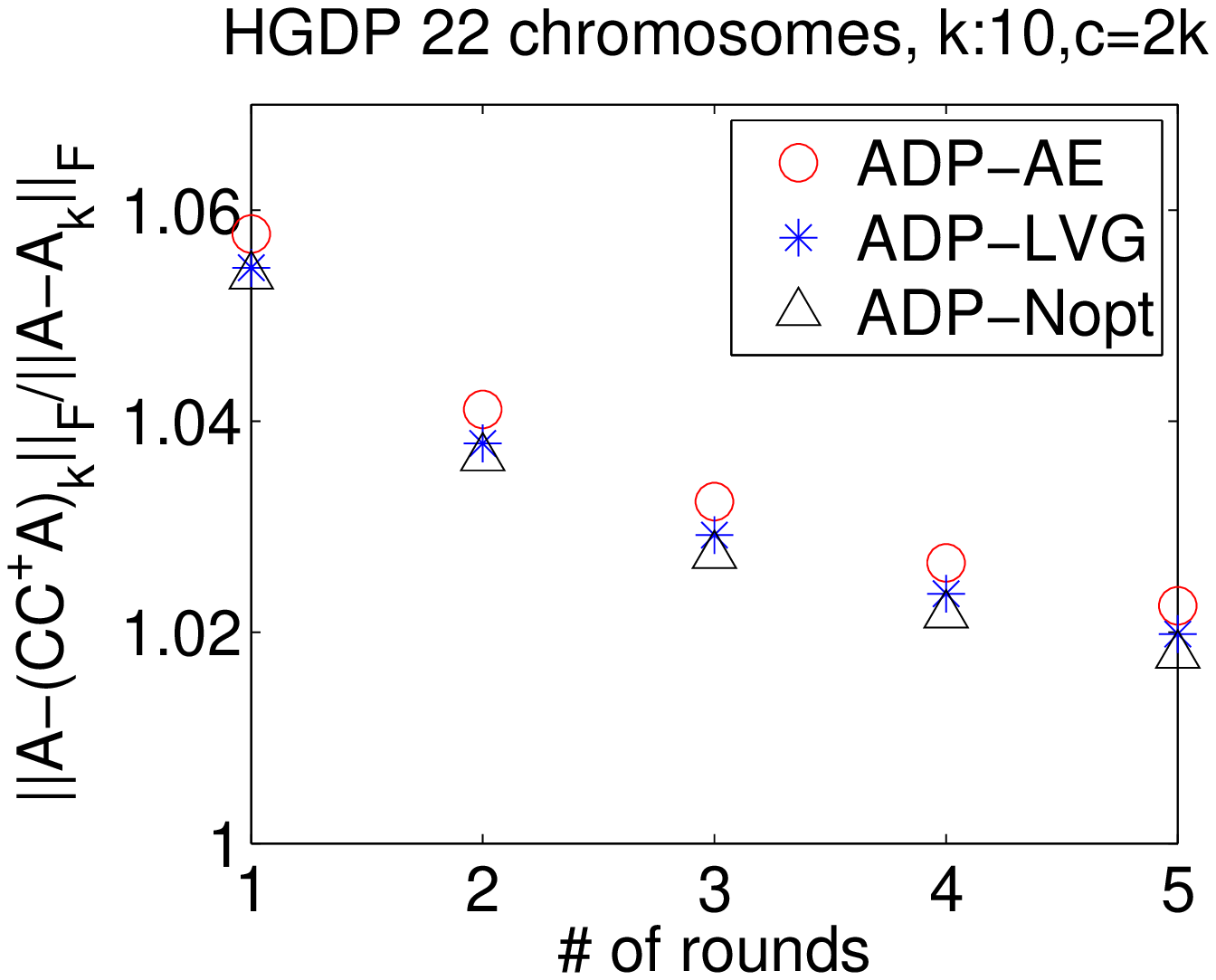}
\end{center}
\caption{Plots showing the error ratio as a function of $k.$}
\label{fig:plots2}
\end{figure}
The theory bound depends on 
\math{\epsilon=c/k}. 
Figure~\ref{fig:plots2}
shows a result for \math{k=10}; \math{c=2k} (\math{k} increases but
\math{\epsilon} is constant). We see 
that the quantitative performance is approximately 
the same, as the theory predicts (since \math{c/k} has not
changed). The percentage error stays the same even though
we are sampling \emph{more} columns because the benchmark 
\math{\norm{\matA-\matA_k}_F} also get smaller when \math{k} increases.
Since {\bf ADP-Nopt} is the superior algorithm, we  
continue with results only for this algorithm.

\begin{figure}[!htb]
\begin{center}
\includegraphics[height=45mm, width=0.45\columnwidth,clip]{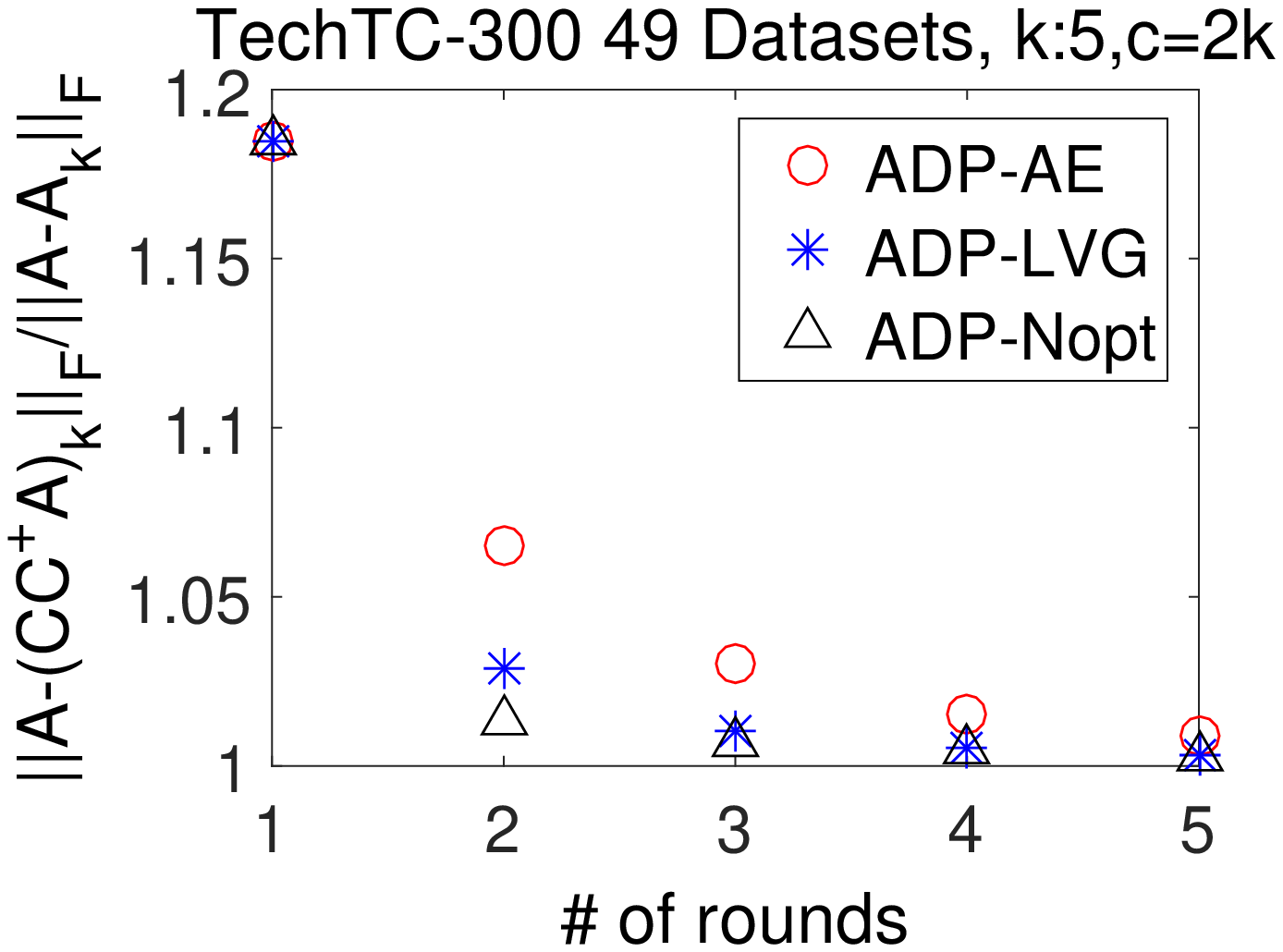}
\includegraphics[height=45mm, width=0.45\columnwidth,clip]{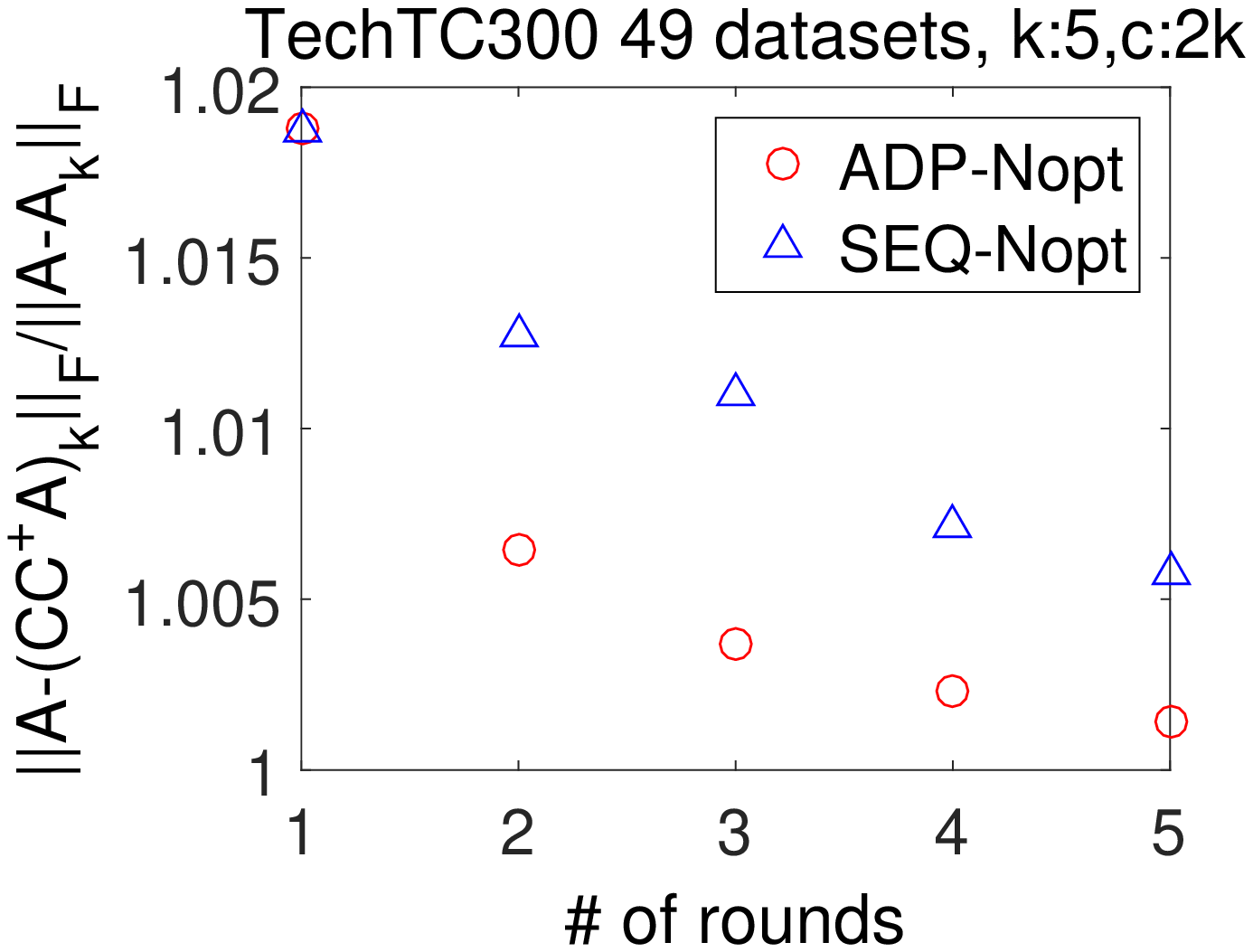}
\end{center}
\caption{The figure on the left shows the error ratio given the same initial selection of columns. The figure on the right shows the error ratio for adaptive sampling vs continued sampling.}
\label{fig:plots3}
\end{figure}
Our next experiment is to test which adaptive strategy works better in 
practice given the same initial selection of columns. That is, in 
Figure~\ref{fig:plots}, {\bf ADP-AE} uses an adaptive sampling based on the
residual \math{\matA-\matC\matC^+\matA} and then adaptively 
samples according to the adaptive strategy in 
\cite{DV06}; the initial columns are chosen with 
the additive error algorithm. Our approach chooses initial columns with 
the relative error CSSP-algorithm and then continues to sample adaptively
based on the relative error CSSP-algorithm and the residual 
\math{\matA-(\matC\matC^+\matA)_{tk}}. We now give
 all the adaptive sampling algorithms the benefit of the 
near-optimal initial columns chosen in the first round by 
the algorithm from~\cite{BMD14}. The result shown in Fig.~\ref{fig:plots3}. 
confirms that {\bf ADP-Nopt} is best even if all adaptive strategies 
\emph{start from the same initial near-optimal columns}.

\noindent \textbf{Adaptive versus Continued Sequential Sampling.}
Our last experiment is to demonstrate that adaptive sampling works
better than continued sequential sampling. We consider 
the relative error CSSP-algorithm in \cite{BMD14} in two modes.
The first is {\bf ADP-Nopt}, which is our adaptive sampling algorithms
which selects \math{tc} columns in \math{t} rounds of \math{c} columns
each. The second is {\bf SEQ-Nopt}, which is just the 
relative error CSSP-algorithm in~\cite{BMD14} sampling \math{tc} columns,
all in one go. The results are shown in Fig.~\ref{fig:plots3}. 
The adaptive boosting of the relative error CSSP-algorithm can gives
up to a 1\% improvement in this data set.

\section{Conclusion}\vskip -0.2cm
We present a new
approach for adaptive sampling algorithms which can boost relative
error CSSP-algorithms, in particular the near optimal CSSP-algorithm 
in~\cite{BMD14}.
We showed theoretical and experimental evidence that our new adaptively boosted
CSSP-algorithm is better than the prior existing adaptive sampling
algorithm which is based on the additive error CSSP-algorithm in~\cite{FKV98}.
We also showed evidence (theoretical and empirical) that our 
adaptive sampling algorithms are better than sequentially sampling
all the columns at once. In particular, our theoretical bounds
give a result which is tighter for matrices whose singular values
decay rapidly.

Several interesting questions remain. We showed that the simplest adaptive
sampling algorithm which samples a constant number of columns in each
round improves upon sequential sampling all at once. 
What is the optimal sampling schedule, and does it depend on the
singular value spectrum of the data matric? In particular,
can improved theoretical bounds or empirical performance 
be obtained by carefully choosing how many columns to select in each round?

It would also be interesting to see the improved adaptive sampling 
boosting of CSSP-algorithms in the actual applications which 
require column selection (such as sparse PCA or unsupervised
feature selection). How do the improved theoretical estimates
we have derived carry over to these problems (theoretically or empirically)?
We leave these directions for future work.

\noindent \textbf{Acknowledgements}\\
Most of the work was done when SP was a graduate student at RPI. PD was supported by IIS-1447283 and IIS-1319280.

\begin{small}
\bibliographystyle{unsrt}
\bibliography{references,mypapers}
\end{small}

\end{document}